\newtheorem{theorem}{Theorem}
\newtheorem{lemma}[theorem]{Lemma}
\newtheorem{corollary}[theorem]{Corollary}
\newtheorem{remark}[theorem]{Remark}
\DeclareMathAlphabet{\mathpzc}{OT1}{pzc}{m}{it}
\def\R{\mathbb R}
\def\C{\mathbb C}
\def\S{\mathbb S}
\def\Dom{\mathcal D}
\def\Im{{\rm Im}}
\def\Re{{\rm Re}}
\def\e{{\rm e}}
\def\i{{\rm i}}
\def\Ln{{\rm Ln}}
\def\Q{L} 
\def\B{\mathcal B}
\def\Ju{J}
\renewcommand\bar{\overline}
\newcommand\cF{{\mathscr F}}
\def\bbbone{{\mathchoice {\rm 1\mskip-4mu l} {\rm 1\mskip-4mu l}
{\rm 1\mskip-4.5mu l} {\rm 1\mskip-5mu l}}}
\def\one{\bbbone}
\def\coker{{\rm cocker}}
\def\Ja{\mathcal J}
\def\Gammaq{\Gamma^{\square}}
\def\Wind{{\mathrm{Wind}}}
\def\ie{\emph{i.e.}}
\def\EE{\mathcal E}
\def\K{\mathcal K}
\def\t{{\scriptscriptstyle\#}}
\begin{document}

\title{Does Levinson's theorem count complex eigenvalues~?}
\author{F. Nicoleau${}^1$, D. Parra${}^2$, S. Richard${}^3$\footnote{On leave of absence from
Univ.~Lyon, Universit\'e Claude Bernard Lyon 1, CNRS UMR 5208, Institut Camille Jordan,
43 blvd.~du 11 novembre 1918, F-69622 Villeurbanne cedex, France.} \footnote{
Supported by JSPS Grant-in-Aid for Young Scientists A
no 26707005.}}

\maketitle

\vspace{-1cm}

\begin{quote}
\emph{
\begin{itemize}
\item[$^1$]
Laboratoire de Math\'ematiques Jean Leray,  UMR CNRS 6629,
Universit\'e de Nantes, 44322 Nantes Cedex 3, France
\item[$^2$]
Univ.~Lyon, Universit\'e Claude Bernard Lyon 1,
CNRS UMR 5208, Institut Camille Jordan,
43 blvd. du 11 novembre 1918, F-69622 Villeurbanne cedex, France
\item[$^3$] Graduate school of mathematics, Nagoya University,
Chikusa-ku,  Nagoya 464-8602, Japan
\item[] \emph{E-mail:} francois.nicoleau@univ-nantes.fr, parra@math.univ-lyon1.fr, richard@math.nagoya-u.ac.jp
\end{itemize}
}
\end{quote}

\vspace{-5mm}

\begin{abstract}
Yes it does~! Indeed an extended version of Levinson's theorem is proposed for a system
involving complex eigenvalues. The perturbed system corresponds to a realization
of the Schr\"odinger operator with inverse square potential on the half-line, while the Dirichlet Laplacian on the half-line is chosen for the reference system.
The resulting relation is an equality between the number of eigenvalues of the perturbed system and the winding number
of the scattering system together with additional operators living at $0$-energy and at infinite energy.
\end{abstract}


\section{Introduction}

Since its discovery by N. Levinson in 1949, the so-called \emph{Levinson's theorem}
has attracted a lot of interest and many researchers have generalized this relation between spectral and scattering theory.
In the simplest situations, it corresponds to an equality between the number of bound states of a quantum mechanical system
and an expression based on the scattering part of the system. Very often, this latter expression involves a regularization procedure,
and in many cases some corrections must also be taken into account.
Since the literature on the subject is very vast, we simply refer to the review papers \cite{Ma,R16} and to the references mentioned therein.

Up to our knowledge, all these investigations have taken place in the context of self-adjoint operators in a Hilbert space
since this framework is the natural one for quantum mechanics.
On the other hand, non self-adjoint operators have recently been deeply studied, and in this context
an extension of Levinson's theorem to non real eigenvalues seems a natural question.
The purpose of this note is precisely to exhibit such a relation for a system involving non self-adjoint operators with a finite family of complex eigenvalues.

The system we consider consists in the operator
$$
H=-\partial_x^2+\big(m^2-\frac14\big)\frac{1}{x^{2}}
$$
in $L^2(\R_+)$ with a complex parameter $m$ and with a possibly complex boundary condition at $x=0$. A large family of such operators
have been recently analyzed in \cite{DR} where their spectral and scattering theory have been constructed.
For the reference system $H_0$ we simply consider the Dirichlet Laplacian on $\R_+$, but other choice are possible
by using the chain rule. By particularizing some expressions obtained in \cite{DR}, explicit formulas for the wave operators
and for the scattering operators for the pair $(H,H_0)$ are directly available. Note that if $H$ is not self-adjoint,
the wave operators are not partial isometries and the scattering operator is not a unitary operator.
Nevertheless, these operators can be defined and their properties studied.

Our approach for proving a Levinson's type theorem is based on the topological
approach firstly introduced \cite{KR06,KR08} and extensively reviewed in \cite{R16}.
In this framework, Levinson's theorem corresponds to an index theorem in scattering theory,
and the corrections come from the contribution of newly introduced operators living at threshold energies.
For the model presented here, these corrections appear both at $0$-energy
and at energy corresponding to $+\infty$. Accordingly, the number of eigenvalues of $H$ will be equal to the winding
number of the scattering operator together with the contributions of an operator living at $0$-energy and of an operator living at $\infty$-energy.
This relation and its explanation correspond to our main result.

This note is organized as follows: In Section \ref{sec_model} we introduce the model and recall
some relations obtained in \cite{DR}. The main result is presented in Section \ref{sec_results} which
also contains some additional information on the wave operators.
Finally, the algebraic framework and the few analytical proofs are provided in Section~\ref{sec_proof}.

\section{The model}\label{sec_model}

The material of this section is borrowed from the reference \cite{DR}.
For any $m \in \C$ we consider the differential expression
\begin{equation*}
L_{m^2} :=-\partial_x^2+\big(m^2-\frac14\big)\frac{1}{x^{2}}\ .
\end{equation*}
The minimal and maximal operators associated with it in $L^2(\R_+)$
are given by
$\Dom(L_{m^2}^{\max}) = \{f\in L^2(\R_+)\mid L_{m^2}  f\in L^2(\R_+)\}$
and $\Dom(L_{m^2}^{\min})$ is
the closure of the restriction of $L_{m^2}$ to $C_{\rm c}^\infty(\R_+)$.
These realizations satisfy the relation $\big(L_{m^2}^{\min}\big)^* = L_{\bar m^2}^{\max}$.
In addition if $|\Re(m)|<1$ then
$L_{m^2}^{\min} \subsetneq L_{m^2}^{\max}$ and $\Dom(L_{m^2}^{\min})$ is a closed subspace of codimension $2$ of $\Dom(L_{m^2}^{\max})$.
More precisely, if $|\Re(m)|\in (0,1)$ and if $f\in \Dom(L_{m^2}^{\max})$, then there exist $a,b\in \C$
such that:
\begin{equation*}
f(x)  - ax^{1/2-m} - bx^{1/2+m} \in\Dom(L_{m^2}^{\min})\hbox{
around }0.
\end{equation*}
Here the expression \emph{$g(x)\in \Dom(L_{m^2} ^{\min})$ around $0$} means that there exists
$\zeta\in C_{\rm c}^\infty\big([0,\infty)\big)$ with $\zeta=1$ around $0$ such that $g\zeta\in
\Dom(L_{m^2} ^{\min})$.
Thus, for any $\kappa\in \C$ we define the family of operators $H_{m,\kappa}$~:
\begin{align*}
\Dom(H_{m,\kappa}) & = \big\{f\in \Dom(L_{m^2}^{\max})\mid
\hbox{ for some }  c \in \C,\\
&\qquad f(x)- c
\big(\kappa x^{1/2-m} +
x^{1/2+m}\big)\in\Dom(L_{m^2}^{\min})\hbox{ around }
0\big\}.
\end{align*}
Note that for simplicity the special case $\Re(m)=0$ is not considered in the present manuscript.

Many properties of these operators have been exhibited in \cite{DR}.
For the spectral theory let us simply mention that $H_{m,\kappa}$ is self-adjoint
if and only if $m$ and $\kappa$ are real. The operators $H_{m,\kappa}$
have a finite number of eigenvalues which are located in $\C\setminus [0,\infty)$, and
in addition one has $[0,\infty)\subset \sigma(H_{m,\kappa})$.
A limiting absorption principle has been shown for these operators on $(0,\infty)$, with a slight restriction
if $(m,\kappa)$ is an exceptional pair.
We say that $(m,\kappa)$ is \emph{an exceptional pair}
if $\kappa\neq 0$ and $\pm \pi \in \Im\big(\frac{1}{m}\Ln(\varsigma)\big)$ with
\begin{equation*}
\varsigma:=\kappa \frac{\Gamma(-m)}{\Gamma(m)},
\end{equation*}
and where $\Gamma$ is the usual $\Gamma$-function and $\Ln$ the multivalued logarithm.

Still in the non-exceptional case, an \emph{incoming} and
an \emph{outgoing Hankel transformations} $\cF_{m,\kappa}^\mp$ can be defined on $C^\infty_{\rm c}(\R_+)$
by the kernels
\begin{equation*}
\cF_{m,\kappa}^\mp(x,y) :=\e^{\mp \i \frac{\pi}{2}m}\sqrt{\frac2\pi}
\frac{\Ja_{m}(xy)- \varsigma \Ja_{-m}(xy)\big(\tfrac{y^2}{4}\big)^{m}}{
1-\varsigma \e^{\mp\i \pi m}\big(\tfrac{y^2}{4}\big)^{m}}\ .
\end{equation*}
with $\Ja_m(z):= \sqrt{\frac{\pi z}{2}} J_m(z)$ and $J_m$ the usual
Bessel function.
These operators extend then continuously to elements of $\B\big(L^2(\R_+)\big)$.
Now, for any bounded operator $B$ with integral kernel $B(x,y)$ let us set
$B^\t$ for its transpose, \ie~for the operator satisfying $B^\t(x,y)=B(y,x)$.
With this notation the operators $\cF_{m,\kappa}^\pm$ satisfy
$\cF_{m,\kappa}^{\pm \t} \cF_{m,\kappa}^{\mp}=\one$.
In addition if we set
\begin{equation*}
\one_{\R_+}(H_{m,\kappa}) := \cF_{m,\kappa}^{+}\;\!\cF_{m,\kappa}^{- \t}
=  \cF_{m,\kappa}^{-}\;\!\cF_{m,\kappa}^{+ \t}
\end{equation*}
one observes that this operator is a projection. The following equalities have also been proved
in \cite{DR}:
For any $k\in \C$ with $\Re(k)>0$ and $-k^2\not \in \sigma_{\rm p}(H_{m,\kappa})$ one has
\begin{align*}
(H_{m,\kappa}+k^2)^{-1} \one_{\R_+}(H_{m,\kappa})
& =\cF_{m,\kappa}^{\pm}(\Q^2+k^2)^{-1} \cF_{m,\kappa}^{\mp \t} \\
& =\one_{\R_+}(H_{m,\kappa})(H_{m,\kappa}+k^2)^{-1}
\end{align*}
where $\Q$ is the usual operator of multiplication by the variable in $L^2(\R_+)$.

Finally, whenever $(m,\kappa)$ and $(m',\kappa')$ are not exceptional pairs,
the wave operators for the pair of operators $(H_{m,\kappa},H_{m',\kappa'})$ can be defined by
\begin{equation*}
W_{m,\kappa;m',\kappa'}^\pm :=  \cF_{m,\kappa}^{\pm}\;\! \cF_{m',\kappa'}^{\mp \t}\;\! .
\end{equation*}
These operators satisfy the relations
\begin{equation}\label{eq_rel1}
W_{m,\kappa;m',\kappa'}^{\mp \t} \;\!W_{m,\kappa;m',\kappa'}^\pm  = \one_{\R_+}(H_{m',\kappa'})
\end{equation}
and
\begin{equation}\label{eq_rel2}
W_{m,\kappa;m',\kappa'}^\pm \;\!W_{m,\kappa;m',\kappa'}^{\mp \t} = \one_{\R_+}(H_{m,\kappa})
\end{equation}
as well as the intertwining relation
\begin{equation*}
W_{m,\kappa;m',\kappa'}^\pm H_{m',\kappa'} = H_{m,\kappa}W_{m,\kappa;m',\kappa'}^\pm\;\! .
\end{equation*}
The scattering operator is finally defined by
\begin{equation*}
S_{m,\kappa;m',\kappa'}:=W_{m,\kappa;m',\kappa'}^{-\t}W_{m,\kappa;m',\kappa'}^-\;\!.
\end{equation*}

\section{The main result}\label{sec_results}

From now on, let us fix a pair $(m,\kappa)$ with $|\Re(m)|\in (0,1)$ and $\kappa\in \C$
which is not exceptional.
For the reference system we consider one of the simplest one, namely the Dirichlet
Laplacian $H_{\rm D}$ on the half-line. This operator is obtained from the general family
for the indices $(\frac{1}{2},0)$. Note that by the chain-rule any other
pair $(m',\kappa')$ can easily be used for the reference system.

In the following statement, we  provide an alternative representation of the wave operator
which is at the root of the algebraic framework presented in \cite{R16}.
For that purpose, note first that for the special choice $(\frac{1}{2},0)$ the transformation $\cF^{+\t}_{\frac{1}{2},0}\equiv \cF^+_{\frac{1}{2},0}$
corresponds to $\e^{\i\frac{\pi}{4}}\cF_{\rm D}$ with
$\cF_{\rm D}$ the usual Fourier sine transformation on $\R_+$.
Let us also define for any $t\in \R$ the function $\Xi_m$ given by
\begin{equation*}
\Xi_m(t):= \e^{\i\ln(2)t} \frac{\Gamma(\frac{m+1+\i t}{2})}{\Gamma(\frac{m+1-\i t}{2})}\ .
\end{equation*}
We finally consider the unitary group $\{U_t\}_{t\in \R}$ acting on any $f\in L^2(\R_+)$ as
\begin{equation*}
[U_tf](x) = \e^{t/2} f\big(\e^t x\big), \qquad \forall x\in \R_+
\end{equation*}
which is usually called \emph{the unitary group of dilations}.
Its self-adjoint generator is denoted by $A$ and is called \emph{the generator of dilations}.

\begin{lemma}\label{lem_wave}
If $m\in \C$ with $|\Re(m)|\in (0,1)$ and if $(m,\kappa)$ is not an exceptional pair then the operator
$W^-_{m,\kappa;\frac{1}{2},0}$ is equal to
\begin{equation}\label{eq_wave_op}
\e^{\i\frac{\pi}{4}}  \Xi_{\frac{1}{2}}(A) \Big(\Xi_m(-A)-\varsigma\Xi_{-m}(-A) \big(\tfrac{H_{\rm D}}{4}\big)^{m}\Big)
\frac{\e^{- \i \frac{\pi}{2}m}}{1-\varsigma \e^{- \i \pi m} \big(\tfrac{H_{\rm D}}{4}\big)^{m}}\ .
\end{equation}
\end{lemma}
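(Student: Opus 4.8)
\emph{Proof plan.} The strategy is to reduce the statement to the ``pure Hankel'' case $\kappa=0$ and then to diagonalise that case by the dilation group. Since $W^-_{m,\kappa;\frac12,0}=\cF^-_{m,\kappa}\cF^{+\t}_{\frac12,0}$ and $\cF^{+\t}_{\frac12,0}=\e^{\i\pi/4}\cF_{\rm D}$, one has $W^-_{m,\kappa;\frac12,0}=\e^{\i\pi/4}\,\cF^-_{m,\kappa}\cF_{\rm D}$, so it is enough to compute $\cF^-_{m,\kappa}\cF_{\rm D}$. Working first on the dense set $C^\infty_{\rm c}(\R_+)$, I would rewrite the integral kernel of $\cF^-_{m,\kappa}$ as
\begin{equation*}
\cF^-_{m,\kappa}(x,y)=\Big(\cF^-_{m,0}(x,y)-\varsigma\,\e^{-\i\pi m}\,\cF^-_{-m,0}(x,y)\big(\tfrac{y^2}{4}\big)^m\Big)\,\frac{1}{1-\varsigma\,\e^{-\i\pi m}\big(\tfrac{y^2}{4}\big)^m}\ ,
\end{equation*}
an identity which, once $\cF^-_{\pm m,0}(x,y)$ are written out, boils down to the trivial relation $\e^{-\i\pi m}\e^{\i\pi m/2}=\e^{-\i\pi m/2}$. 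Reading the $y$-dependent factors as multiplication operators acting on the right and commuting them through $\cF_{\rm D}$ by means of $\cF_{\rm D}^2=\one$ and $\cF_{\rm D}H_{\rm D}\cF_{\rm D}=\Q^2$ (so that $\varphi(\Q^2)\,\cF_{\rm D}=\cF_{\rm D}\,\varphi(H_{\rm D})$) turns this into
\begin{equation*}
\cF^-_{m,\kappa}\cF_{\rm D}=\Big(\cF^-_{m,0}\cF_{\rm D}-\varsigma\,\e^{-\i\pi m}\,\cF^-_{-m,0}\cF_{\rm D}\big(\tfrac{H_{\rm D}}{4}\big)^m\Big)\,\frac{1}{1-\varsigma\,\e^{-\i\pi m}\big(\tfrac{H_{\rm D}}{4}\big)^m}\ ,
\end{equation*}
so that everything reduces to the evaluation of $\cF^-_{\pm m,0}\cF_{\rm D}$.

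For $\kappa=0$ one has $\varsigma=0$, hence $\cF^-_{m,0}=\e^{-\i\pi m/2}\,\mathcal H_m$ where $\mathcal H_m$ is the Hankel transformation of order $m$ (integral kernel $\sqrt{xy}\,J_m(xy)$), and $\cF_{\rm D}=\mathcal H_{1/2}$. Since the kernel of $\mathcal H_\nu$ is invariant under $(x,y)\mapsto(\e^tx,\e^{-t}y)$, one has $\mathcal H_\nu U_t=U_{-t}\mathcal H_\nu$ for all $t\in\R$; consequently $\mathcal H_m\mathcal H_{1/2}$ commutes with the entire dilation group and is therefore a bounded function $\phi(A)$ of its generator. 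To identify $\phi$ I would let $\mathcal H_m\mathcal H_{1/2}$ act on the generalised eigenfunctions $\psi_\lambda(x):=x^{-1/2+\i\lambda}$ of $A$: the classical Mellin-transform formula for Bessel functions gives $\mathcal H_\nu(y^{s-1})=c_\nu(s)\,x^{-s}$ with $c_\nu(\tfrac12+\i\lambda)=\Xi_\nu(\lambda)$, the passage from $x^{s-1}$ to $x^{-s}$ producing precisely the reflection $\lambda\mapsto-\lambda$ already visible in $\mathcal H_\nu U_t=U_{-t}\mathcal H_\nu$. Composing the two transformations yields $\mathcal H_m\mathcal H_{1/2}\,\psi_\lambda=\Xi_{1/2}(\lambda)\,\Xi_m(-\lambda)\,\psi_\lambda$, whence $\cF^-_{m,0}\cF_{\rm D}=\e^{-\i\pi m/2}\,\Xi_{1/2}(A)\,\Xi_m(-A)$ and, replacing $m$ by $-m$, $\cF^-_{-m,0}\cF_{\rm D}=\e^{\i\pi m/2}\,\Xi_{1/2}(A)\,\Xi_{-m}(-A)$.

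Inserting these two formulas into the expression obtained above, using once more $\e^{-\i\pi m}\e^{\i\pi m/2}=\e^{-\i\pi m/2}$, using that scalar factors commute with everything, and factoring $\e^{-\i\pi m/2}\Xi_{1/2}(A)$ out on the left, one gets $\cF^-_{m,\kappa}\cF_{\rm D}=\e^{-\i\pi m/2}\,\Xi_{1/2}(A)\big(\Xi_m(-A)-\varsigma\,\Xi_{-m}(-A)(\tfrac{H_{\rm D}}{4})^m\big)\big(1-\varsigma\,\e^{-\i\pi m}(\tfrac{H_{\rm D}}{4})^m\big)^{-1}$; multiplying by $\e^{\i\pi/4}$ and moving the scalar $\e^{-\i\pi m/2}$ to the right of the bracket reproduces exactly \eqref{eq_wave_op}. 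Finally, all operators appearing in \eqref{eq_wave_op} are bounded --- this is precisely where the non-exceptional hypothesis enters, since it guarantees that $1-\varsigma\,\e^{-\i\pi m}(t/4)^m$ stays bounded away from $0$ on $[0,\infty)$ --- so the identity established on the core $C^\infty_{\rm c}(\R_+)$ extends by continuity to all of $L^2(\R_+)$.

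I expect the main obstacle to be the computation of the Hankel case: one has to be scrupulous about the normalisations (the powers of $2$ hidden in the factor $\e^{\i\ln(2)t}$ of $\Xi_\nu$, the precise strip of $\C$ in which the Mellin transform of $J_\nu$ is valid for the complex orders $\nu=m$ and $\nu=-m$, and the sign of the argument of $A$ created by the reflection), and one must also give a rigorous justification of the otherwise purely formal kernel manipulations, which involve the \emph{unbounded} multiplication operator $(\Q^2/4)^m$; this is the reason for carrying out the computation first on a dense core and only then passing to the operator limit.
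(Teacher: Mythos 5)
Your argument is correct and lands on \eqref{eq_wave_op} with all constants in place, but it gets there by a more self-contained route than the paper. The paper's proof is essentially a two-line computation: it conjugates $W^-_{m,\kappa;\frac12,0}$ by $\cF_{\rm D}$, imports the two operator identities $\cF_{\rm D}=\Xi_{\frac12}(-A)\Ju$ and the already-factored form of $\cF^-_{m,\kappa}$ verbatim from \cite[Prop.~4.5 \& Lem.~6.3]{DR} (our \eqref{eq_new_1} and \eqref{eq_new_2}), and then uses $\cF_{\rm D}^*A\cF_{\rm D}=-A$ and $\cF_{\rm D}^*\Q^2\cF_{\rm D}=H_{\rm D}$. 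What you do differently is twofold: (i) you perform the $\varsigma$-decomposition at the level of integral kernels --- which, as you observe, reduces to $\e^{-\i\pi m}\e^{\i\pi m/2}=\e^{-\i\pi m/2}$ and is exactly the algebraic content of the quoted formula \eqref{eq_new_2}; and (ii) you rederive the Mellin diagonalization of the pure Hankel transform from the classical formula $\int_0^\infty J_\nu(u)u^{\i\lambda}\,\mathrm{d}u=2^{\i\lambda}\Gamma(\tfrac{\nu+1+\i\lambda}{2})/\Gamma(\tfrac{\nu+1-\i\lambda}{2})$, which is indeed $\Xi_\nu(\lambda)$, instead of citing it; the reflection $\lambda\mapsto-\lambda$ you track is precisely the role played by the inversion $\Ju$ in the paper's formulas, so your $\cF^-_{m,0}\cF_{\rm D}=\e^{-\i\pi m/2}\Xi_{\frac12}(A)\Xi_m(-A)$ agrees with $\Ju\Xi_m(A)\cdot\Xi_{\frac12}(-A)\Ju$ via $\Ju A\Ju=-A$. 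The price of self-containedness is the extra care you rightly flag: the eigenfunction computation on $x^{-1/2+\i\lambda}$ must be justified beyond the formal level (simplicity of the spectrum of $A$ makes the ``function of $A$'' conclusion legitimate), and the unbounded factor $(\Q^2/4)^m$ must be kept glued to the bounded combination with the denominator, which your passage through the core $C^\infty_{\rm c}(\R_+)$ handles. In short: same underlying mechanism --- diagonalization of Hankel transforms by the dilation group --- but your version proves the inputs that the paper merely quotes.
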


Motivated by the formula obtained in the previous statement, let us now define the function of two variables:
$\Gamma_{m,\kappa;\frac{1}{2},0} : \R_+ \times \R$ by
\begin{equation*}
\Gamma_{m,\kappa;\frac{1}{2},0}(x,t) :=  \e^{\i\frac{\pi}{4}}  \Xi_{\frac{1}{2}}(t) \Big(\Xi_m(-t)-\varsigma\Xi_{-m}(-t) \big(\tfrac{x^2}{4}\big)^{m}\Big)
\frac{\e^{- \i \frac{\pi}{2}m}}{1-\varsigma \e^{- \i \pi m} \big(\tfrac{x^2}{4}\big)^{m}}\ .
\end{equation*}
Note that the condition $(m,\kappa)$ is not an exceptional pair precisely prevents the denominator in the last factor
to vanish.
In addition, it is easily observed that this function is continuous on the square $\blacksquare:=[0,+\infty]\times [-\infty,+\infty]$,
and therefore its restriction on the boundary $\square$ of the square is also well defined and continuous.
Note that this boundary is made of four parts: $\square = B_1\cup B_2 \cup B_3 \cup B_4$
with $B_1 = \{0\}\times[-\infty,+\infty]$, $B_2 =[0,+\infty]\times\{+\infty\}$, $B_3 = \{+\infty\}\times[-\infty,+\infty]$,
and $B_4=[0,+\infty]\times\{-\infty\}$. Thus, the algebra $C(\square)$ of continuous functions on $\square$
can be viewed as a subalgebra of
\begin{equation}\label{eq_alg_sum}
C\big([-\infty,+\infty]\big)\oplus C\big([0,+\infty]\big)\oplus C\big([-\infty,+\infty]\big)\oplus C\big([0,+\infty]\big)
\end{equation}
given by elements $(\Gamma_1, \Gamma_2, \Gamma_3, \Gamma_4)$ which coincide at the corresponding end points, that
is,
$\Gamma_1(+\infty) = \Gamma_2(0)$, $\Gamma_2(+\infty) = \Gamma_3(+\infty)$, $\Gamma_3(-\infty) = \Gamma_4(+\infty)$, and
$\Gamma_4(0) = \Gamma_1(-\infty)$.
As shown in the following section one gets for $\kappa\neq 0$
\begin{align}
\label{eq_1} \Gamma_1(t) & :=\Gamma_{m,\kappa;\frac{1}{2},0}(0,t) = \left\{\begin{matrix}  \e^{\i\frac{\pi}{2}(\frac{1}{2}-m)}  \Xi_{\frac{1}{2}}(t) \Xi_m(-t)
& \hbox{ if } \ \Re(m)>0, \\
\e^{\i\frac{\pi}{2}(\frac{1}{2}+m)}  \Xi_{\frac{1}{2}}(t) \Xi_{-m}(-t) & \hbox{ if } \ \Re(m)<0,
\end{matrix}\right.\\
\label{eq_scat_op}  \Gamma_2(x) & := \Gamma_{m,\kappa;\frac{1}{2},0}(x,+\infty) =
e^{\i\pi(\frac{1}{2}-m)} \frac{1 -\varsigma \e^{+\i \pi m} \big(\tfrac{x^2}{4}\big)^{m} }{1-\varsigma \e^{- \i \pi m} \big(\tfrac{x^2}{4}\big)^{m}}, \\
\Gamma_3(t) & :=\Gamma_{m,\kappa;\frac{1}{2},0}(+\infty,t) = \left\{\begin{matrix} \e^{\i\frac{\pi}{2}(\frac{1}{2}+m)}  \Xi_{\frac{1}{2}}(t) \Xi_{-m}(-t) & \hbox{ if } \ \Re(m)>0, \\
\e^{\i\frac{\pi}{2}(\frac{1}{2}-m)}  \Xi_{\frac{1}{2}}(t) \Xi_m(-t)  & \hbox{ if } \ \Re(m)<0,
\end{matrix}\right.\\
\label{eq_2} \Gamma_4(x) & := \Gamma_{m,\kappa;\frac{1}{2},0}(x,-\infty) = 1.
\end{align}
In the special case $\kappa=0$ one has $\Gamma_1(t)=\Gamma_3(t)=\e^{\i\frac{\pi}{2}(\frac{1}{2}-m)}  \Xi_{\frac{1}{2}}(t) \Xi_m(-t)$,
$\Gamma_2(x)= \e^{\i\pi(\frac{1}{2}-m)}$ and $\Gamma_4(x)=1$.

Let us now observe that the boundary $\square$  of $\blacksquare$ is homeomorphic to the circle $\S$.
Observe in addition that the restriction $\Gammaq_{m,\kappa;\frac{1}{2},0}$ of the function $\Gamma_{m,\kappa;\frac{1}{2},0}$ to $\square$ takes its values in $\C\setminus \{0\}$.
Then, since $\Gammaq_{m,\kappa;\frac{1}{2},0}$ is a continuous function on the closed curve $\square$ and takes
non-zero values, its winding number $\Wind(\Gammaq_{m,\kappa;\frac{1}{2},0})$ is well defined.
By convention, we shall turn around $\square$ clockwise
and the increase in the winding number is also counted clockwise. Let us stress that the contribution on $B_3$
has to be computed from $+\infty$ to $-\infty$, and the contribution on $B_4$ from $+\infty$ to $0$.
Our main result now reads:

\begin{theorem}\label{thm_main}
If $m\in \C$ with $|\Re(m)|\in (0,1)$ and if $(m,\kappa)$ is not an exceptional pair then
\begin{equation}\label{eq_Lev_1}
\Wind(\Gammaq_{m,\kappa;\frac{1}{2},0}) = \hbox{ number of eigenvalues of } H_{m,\kappa}\;\!.
\end{equation}
\end{theorem}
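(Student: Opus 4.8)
The plan is to turn \eqref{eq_Lev_1} into an index theorem, following the topological approach of \cite{KR06,KR08,R16}. Let $\K$ denote the ideal of compact operators on $L^2(\R_+)$ and let $\EE$ be the $C^*$-algebra generated by $\K$ together with all products $g(A)\,h(H_{\rm D})$ with $g\in C([-\infty,+\infty])$ and $h\in C([0,+\infty])$. The crucial analytic input, which I would establish separately (it is implicit in \cite{R16}), is that each commutator $[g(A),h(H_{\rm D})]$ is compact; since $\sigma(A)=\R$ and $H_{\rm D}$ has purely absolutely continuous spectrum $[0,+\infty)$, this provides a short exact sequence $0\to\K\to\EE\to C(\square)\to 0$ whose quotient map $q$ sends $g(A)h(H_{\rm D})$ to the function $(x,t)\mapsto g(t)h(x^2)$ on $\square$, the four edges $B_1,B_2,B_3,B_4$ corresponding to $H_{\rm D}=0$, $A=+\infty$, $H_{\rm D}=+\infty$ and $A=-\infty$. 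The role of Lemma~\ref{lem_wave} is to place $W^-:=W^-_{m,\kappa;\frac12,0}$ in this framework: grouping \eqref{eq_wave_op} into a sum of two terms of the form (function of $A$)$\,\cdot\,$(function of $H_{\rm D}$), with $A$-parts $\Xi_{\frac12}(A)\Xi_{\pm m}(-A)$ and $H_{\rm D}$-parts $\big(1-\varsigma\e^{-\i\pi m}(H_{\rm D}/4)^m\big)^{-1}$ and $(H_{\rm D}/4)^m\big(1-\varsigma\e^{-\i\pi m}(H_{\rm D}/4)^m\big)^{-1}$, one checks that each $A$-part extends continuously to $[-\infty,+\infty]$ and each $H_{\rm D}$-part to $[0,+\infty]$, the non-exceptional hypothesis being precisely what prevents the denominator from vanishing. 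Hence $W^-\in\EE$, and since changing the order of the two factors alters an element of $\EE$ only by a compact operator, $q(W^-)=\Gammaq_{m,\kappa;\frac12,0}$; taking the four boundary limits then reproduces \eqref{eq_1}--\eqref{eq_2}.

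Next I would identify $W^-$ as a Fredholm operator and compute its index spectrally. Because $H_{\rm D}$ has purely absolutely continuous spectrum, $\one_{\R_+}(H_{\rm D})=\one$, so \eqref{eq_rel1} and \eqref{eq_rel2} become $W^{+\t}W^-=\one$ and $W^-W^{+\t}=\one_{\R_+}(H_{m,\kappa})=:P$. Thus $W^-$ is injective with closed range $\mathrm{Ran}(P)$, and $W^{+\t}$, which also lies in $\EE$, is a two-sided inverse of $W^-$ modulo $\K$; therefore $W^-$ is Fredholm with $\mathrm{ind}(W^-)=-\dim\mathrm{Ran}(\one-P)$. By the spectral analysis of \cite{DR}, $\one-P$ is the Riesz projection onto the finite-dimensional subspace generated by the eigenvectors of $H_{m,\kappa}$, so that $\dim\mathrm{Ran}(\one-P)$ is exactly the number of eigenvalues of $H_{m,\kappa}$ occurring in \eqref{eq_Lev_1}.

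Finally I would evaluate the same index through the connecting (index) map of the six-term exact sequence attached to $0\to\K\to\EE\to C(\square)\to 0$. Since $\square$ is homeomorphic to $\S$ one has $K_1(C(\square))\cong\mathbb Z$, generated by the class of the identity function, while $K_0(\K)\cong\mathbb Z$ is generated by the class of a rank-one projection; under these identifications the index map is, once the clockwise orientation fixed in the statement is used (with $B_3$ traversed from $+\infty$ to $-\infty$ and $B_4$ from $+\infty$ to $0$), minus the winding number. The function $\Gammaq_{m,\kappa;\frac12,0}$ takes values in $\C\setminus\{0\}$, hence defines a class in $K_1(C(\square))$ whose image under the index map equals on the one hand $\mathrm{ind}(W^-)$ and on the other hand $-\Wind(\Gammaq_{m,\kappa;\frac12,0})$; comparing the two and using the previous paragraph yields $\Wind(\Gammaq_{m,\kappa;\frac12,0})=\dim\mathrm{Ran}(\one-P)=$ number of eigenvalues of $H_{m,\kappa}$. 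I expect the main obstacle to be the first step: rigorously showing $W^-\in\EE$ with symbol $\Gammaq_{m,\kappa;\frac12,0}$, which demands both the compactness of the commutators $[g(A),h(H_{\rm D})]$ and precise, uniform control (via Stirling's formula) of the functions $\Xi_{\frac12},\Xi_{\pm m}$ up to the boundary of $\blacksquare$, in particular the cancellation of their individually divergent phases at the four corners that produces \eqref{eq_1}--\eqref{eq_2}; the sign and orientation bookkeeping in the last step, though routine, must also be carried out with care to land on the $+$ sign in \eqref{eq_Lev_1}.
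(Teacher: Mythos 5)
Your proposal is correct and follows essentially the same route as the paper: the representation of $W^-_{m,\kappa;\frac12,0}$ from Lemma \ref{lem_wave} places it in the algebra $\EE_{(H_{\rm D},A)}$ with symbol $\Gammaq_{m,\kappa;\frac{1}{2},0}$, the relations \eqref{eq_rel1}--\eqref{eq_rel2} together with the identification of $\one-\one_{\R_+}(H_{m,\kappa})$ as the Riesz projection onto the eigenspaces give the Fredholm index, and the connecting map of $0\to\K\to\EE\to C(\square)\to 0$ (which the paper imports from \cite[Thm.~4.4]{R16}) identifies that index with minus the winding number. The only presentational difference is that you spell out the $K$-theoretic six-term sequence that the paper delegates to the cited references.
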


The above statement is a topological version of Levinson's theorem, and corresponds to an index theorem.
In order to make the link with the usual formulation, it is necessary to consider the 4 contributions $\Gamma_j$
separately. Since $\Gamma_1(t) = \Gamma_{m,\kappa;\frac{1}{2},0}(0,t)$, the corresponding operator $\Gamma_1(A)$ can be understood
as an operator related to the $0$-energy of the scattering process. Similarly, $\Gamma_3(A)$ is an operator
associated with the energy $\infty$ of the scattering process.
On the other hand, the expression in \eqref{eq_scat_op} corresponds to the scattering operator, or more precisely one has
$\Gamma_2(\sqrt{H_{\rm D}})= S_{m,\kappa;\frac{1}{2},0} \equiv  S_{m,\kappa;\frac{1}{2},0}(H_{\rm D})$, where
$S_{m,\kappa;\frac{1}{2},0}$ is the scattering operator for the pair $(H_{m,\kappa},H_{\rm D})$,
see \cite[Sec.~6.5]{DR}.
Now, by a slight adaptation of the proof of \cite[Lem.~4]{KPR} the contribution to the winding number coming from $\Gamma_1$ and $\Gamma_3$
are respectively equal to $\frac{\Re(m)}{2}-\frac{1}{4}$ and $\frac{\Re(m)}{2}+\frac{1}{4}$ when $\Re(m)>0$.
Similarly, for $\Re(m)<0$ these two contributions are respectively equal to $-\frac{\Re(m)}{2}-\frac{1}{4}$ and $-\frac{\Re(m)}{2}+\frac{1}{4}$.
By collecting these information one gets:

\begin{corollary}\label{corol_Lev}
If $m\in \C$ with $|\Re(m)|\in (0,1)$, if $\kappa \neq 0$ and if $(m,\kappa)$ is not an exceptional pair one has
\begin{equation}\label{eq_precise}
\Wind\big(S_{m,\kappa;\frac{1}{2},0}(\cdot)\big) +|\Re(m)| = \hbox{ number of eigenvalues of } H_{m,\kappa}\;\!.
\end{equation}
\end{corollary}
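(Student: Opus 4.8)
\emph{Outline of the argument.} The identity \eqref{eq_precise} will be deduced from Theorem~\ref{thm_main} by splitting $\Wind(\Gammaq_{m,\kappa;\frac12,0})$ into the contributions of the four boundary arcs $B_1,B_2,B_3,B_4$ of $\square$. Since the winding number of a non-vanishing continuous function along a closed curve is the sum of the variations of its argument along consecutive arcs, and since the restriction of $\Gammaq_{m,\kappa;\frac12,0}$ to $B_j$ is the function $\Gamma_j$ of \eqref{eq_1}--\eqref{eq_2}, one gets
\begin{equation*}
\Wind(\Gammaq_{m,\kappa;\frac12,0}) = w_1+w_2+w_3+w_4,
\end{equation*}
where $w_j$ is the clockwise-counted variation of the argument of $\Gamma_j$ along $B_j$, traversed with the orientation fixed before Theorem~\ref{thm_main}: $t:-\infty\to+\infty$ on $B_1$, $x:0\to+\infty$ on $B_2$, $t:+\infty\to-\infty$ on $B_3$, and $x:+\infty\to0$ on $B_4$. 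The first routine point I would check is that these four orientations are compatible with the corner conditions $\Gamma_1(+\infty)=\Gamma_2(0)$, $\Gamma_2(+\infty)=\Gamma_3(+\infty)$, $\Gamma_3(-\infty)=\Gamma_4(+\infty)$, $\Gamma_4(0)=\Gamma_1(-\infty)$.

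Two of the four contributions are immediate. Since $\Gamma_4\equiv1$ by \eqref{eq_2}, one has $w_4=0$. By \eqref{eq_scat_op} together with the identification $\Gamma_2(\sqrt{H_{\rm D}})=S_{m,\kappa;\frac12,0}(H_{\rm D})$ recalled just after Theorem~\ref{thm_main}, the arc $B_2$ is a parametrization (by the momentum variable) of the scattering operator, so $w_2=\Wind\big(S_{m,\kappa;\frac12,0}(\cdot)\big)$, the orientation $x:0\to+\infty$ and the clockwise convention being exactly those used to define this winding number.

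The heart of the matter is the evaluation of $w_1$ and $w_3$, for which I would follow and slightly adapt \cite[Lem.~4]{KPR}. Inserting the definition of $\Xi$ into \eqref{eq_1}, for $\Re(m)>0$ the function $\Gamma_1$ is a $t$-independent constant times a ratio of four $\Gamma$-factors in which the exponentials $\e^{\pm\i\ln(2)t}$ cancel; likewise for $\Gamma_3$, with $\Xi_m$ replaced by $\Xi_{-m}$ and the prefactor changed accordingly, and the roles of $\Xi_m$ and $\Xi_{-m}$ exchanged when $\Re(m)<0$. Applying Stirling's expansion for $\ln\Gamma$ to the four factors, the leading divergent terms cancel pairwise and one obtains, for $\Re(m)>0$, the endpoint values
\begin{equation*}
\Gamma_1(-\infty)=1,\quad \Gamma_1(+\infty)=\e^{\i\pi(\frac12-m)},\quad \Gamma_3(+\infty)=\e^{\i\pi(\frac12+m)},\quad \Gamma_3(-\infty)=1,
\end{equation*}
which are consistent with the corner conditions above (and one gets the analogous expressions with $m$ replaced by $-m$ at the appropriate places when $\Re(m)<0$). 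These endpoints determine $w_1$ and $w_3$ only modulo an integer; to pin that integer down — that is, to get $w_1=\frac{\Re(m)}2-\frac14$ and $w_3=\frac{\Re(m)}2+\frac14$ for $\Re(m)>0$, and $w_1=-\frac{\Re(m)}2-\frac14$, $w_3=-\frac{\Re(m)}2+\frac14$ for $\Re(m)<0$ — I would show that along each arc the argument of the relevant product of $\Gamma$-functions is monotone up to an additive constant, which is precisely the phenomenon \cite[Lem.~4]{KPR} establishes for a single factor $\Xi_\ell$ and which survives for the products at hand by the same method; the overall sign is dictated by the clockwise convention.

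Adding up the four contributions gives $w_1+w_2+w_3+w_4=\Wind\big(S_{m,\kappa;\frac12,0}(\cdot)\big)+|\Re(m)|$ in both cases $\Re(m)>0$ and $\Re(m)<0$ (the sum $w_1+w_3$ collapsing to $|\Re(m)|$ either way), and Theorem~\ref{thm_main} then yields \eqref{eq_precise}. The main obstacle is exactly this last step for $w_1$ and $w_3$: the winding contribution of an open arc is not determined by its endpoints, so the monotonicity analysis of the phases of the $\Gamma$-function products — the only genuinely analytic ingredient of the proof — is where the real work lies; the orientation/corner bookkeeping and the identification of $\Gamma_2$ with the scattering operator are, by contrast, routine.
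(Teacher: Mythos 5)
Your proposal is correct and follows essentially the same route as the paper: the paper likewise splits $\Wind(\Gammaq_{m,\kappa;\frac12,0})$ over the four arcs, identifies the $B_2$ contribution with $\Wind\big(S_{m,\kappa;\frac12,0}(\cdot)\big)$, takes the $B_4$ contribution to vanish, and obtains the $B_1$ and $B_3$ contributions ($\pm\frac{\Re(m)}{2}\mp\frac14$, summing to $|\Re(m)|$) by a slight adaptation of \cite[Lem.~4]{KPR} before invoking Theorem~\ref{thm_main}. Your additional remarks on the endpoint values and on why a monotonicity argument is needed to fix the integer ambiguity simply spell out the content of that cited adaptation.
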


Note that the special case $\kappa=0$ is less interesting since $S_{m,0;\frac{1}{2},0}$ is a constant
and both sides of \eqref{eq_Lev_1} in this case are equal to $0$.

\begin{remark}
For general $m$ and $\kappa$ in the range mentioned above, the computation of the r.h.s.~of \eqref{eq_precise} is rather involved
and can take arbitrary large but finite values.
Some useful information are provided in \cite[Prop.~5.3 \& Lem.~5,4]{DR}.
Equivalently, the computation of the winding number of the map $x\mapsto S_{m,\kappa;\frac{1}{2},0}(x)$ is fairly tricky
for complex numbers $m$ and $\kappa$. Anyway, we can directly see from Corollary \ref{corol_Lev} that the computation of this winding
number does not provide enough information for deducing the number of bound states of $H_{m,\kappa}$,
and that the correction $+|\Re(m)|$ can take arbitrary values in $(0,1)$.
\end{remark}

\section{The proofs}\label{sec_proof}

Before introducing the necessary algebraic framework, let us check the analytical part of our investigations.
More precisely let us check some properties of the wave operators and of the function $\Gamma_{m,\kappa;\frac{1}{2},0}$.

\begin{proof}[Proof of Lemma \ref{lem_wave}]
We first look at the wave operator in the spectral representation of $H_{\rm D}$,
or more precisely let us consider the operator
\begin{equation*}
\cF^{*}_{\rm D}W^-_{m,\kappa;\frac{1}{2},0} \cF_{\rm D}
= \cF^{*}_{\rm D} \cF^-_{m,\kappa} \cF^{+\t}_{\frac{1}{2},0}  \cF_{\rm D}
= \e^{\i\frac{\pi}{4}} \cF_{\rm D}^* \cF^-_{m,\kappa}
\end{equation*}
with
\begin{equation}\label{eq_new_1}
\cF^*_{\rm D} = \cF_{\rm D}=  \Xi_{\frac{1}{2}}(-A) \Ju
\end{equation}
and
\begin{equation}\label{eq_new_2}
\cF^-_{m,\kappa} =  \Ju \Big(\Xi_m(A)-\varsigma\Xi_{-m}(A) \big(\tfrac{\Q^2}{4}\big)^{m}\Big)
\frac{\e^{- \i \frac{\pi}{2}m}}{1-\varsigma \e^{- \i \pi m} \big(\tfrac{\Q^2}{4}\big)^{m}}\ .
\end{equation}
The unitary and self-adjoint transformation
$\Ju:L^2(\R_+)\to L^2(\R_+)$  is defined by the formula
$\big(\Ju f\big)(x)=\frac{1}{x}f\big(\frac{1}{x}\big)$
for any $f\in L^2(\R_+)$ and $x\in \R_+$
Note that equalities \eqref{eq_new_1} and \eqref{eq_new_2} have been obtained in \cite[Prop.~4.5 \& Lem.~6.3]{DR}.
By taking into account the relations $\cF^{*}_{\rm D}A\cF_{\rm D}=-A$
and $\cF^{*}_{\rm D}\Q^2\cF_{\rm D}=H_{\rm D}$ one directly deduces formula \eqref{eq_wave_op}.
\end{proof}

For showing that the function $\Gamma_{m,\kappa;\frac{1}{2},0}$ extends to the square $[0,+\infty]\times [-\infty,+\infty]$
let us compute its asymptotics.
For the limits we recall from \cite[Eq.~4.25]{DR} that
\begin{equation*}
\Xi_\frac{1}{2}(\mp\infty)\Xi_{m'}(\pm\infty)
=\e^{\mp\i \frac{\pi}{2}(\frac{1}{2}-m')}.
\end{equation*}
One thus infers that for $\kappa \neq 0$
\begin{align*}
\Gamma_{m,\kappa;\frac{1}{2},0}(x,-\infty) & = \e^{\i\frac{\pi}{4}} \Big(\e^{-\i \frac{\pi}{2}(\frac{1}{2}-m)}
-\varsigma \e^{-\i \frac{\pi}{2}(\frac{1}{2}+m)} \big(\tfrac{x^2}{4}\big)^{m}\Big)
\frac{\e^{- \i \frac{\pi}{2}m}}{1-\varsigma \e^{- \i \pi m} \big(\tfrac{x^2}{4}\big)^{m}} \\
& =\frac{1 -\varsigma \e^{- \i \pi m} \big(\tfrac{x^2}{4}\big)^{m} }{1-\varsigma \e^{- \i \pi m} \big(\tfrac{x^2}{4}\big)^{m}} \\
& = 1
\end{align*}
while
\begin{align*}
\Gamma_{m,\kappa;\frac{1}{2},0}(x,+\infty) & = \e^{\i\frac{\pi}{4}} \Big(\e^{+\i \frac{\pi}{2}(\frac{1}{2}-m)}
-\varsigma \e^{+\i \frac{\pi}{2}(\frac{1}{2}+m)} \big(\tfrac{x^2}{4}\big)^{m}\Big)
\frac{\e^{- \i \frac{\pi}{2}m}}{1-\varsigma \e^{- \i \pi m} \big(\tfrac{x^2}{4}\big)^{m}} \\
& =
e^{\i\pi(\frac{1}{2}-m)} \frac{1 -\varsigma \e^{+\i \pi m} \big(\tfrac{x^2}{4}\big)^{m} }{1-\varsigma \e^{- \i \pi m} \big(\tfrac{x^2}{4}\big)^{m}}\ .
\end{align*}

For the other two limits one gets if $\Re(m)>0$
\begin{equation*}
\Gamma_{m,\kappa;\frac{1}{2},0}(0,t) =\lim_{x\to 0}\Gamma_{m,\kappa;\frac{1}{2},0}(x,t)= \e^{\i\frac{\pi}{2}(\frac{1}{2}-m)}  \Xi_{\frac{1}{2}}(t) \Xi_m(-t),
\end{equation*}
while if $\Re(m)<0$ one has $\Gamma_{m,\kappa;\frac{1}{2},0}(0,t) = \e^{\i\frac{\pi}{2}(\frac{1}{2}+m)}  \Xi_{\frac{1}{2}}(t) \Xi_{-m}(-t)$.
On the other hand, if $\Re(m)>0$ one gets
\begin{equation*}
\Gamma_{m,\kappa;\frac{1}{2},0}(+\infty,t) = \lim_{x\to +\infty}\Gamma_{m,\kappa;\frac{1}{2},0}(x,t)
= \e^{\i\frac{\pi}{2}(\frac{1}{2}+m)}  \Xi_{\frac{1}{2}}(t) \Xi_{-m}(-t)
\end{equation*}
while if $\Re(m)<0$ one has
$\Gamma_{m,\kappa;\frac{1}{2},0}(+\infty,t) = \e^{\i\frac{\pi}{2}(\frac{1}{2}-m)}  \Xi_{\frac{1}{2}}(t) \Xi_m(-t)$.

The above expressions are summarized in equations \eqref{eq_1} to \eqref{eq_2}.
The case $\kappa=0$ can be obtained similarly, and is simpler.

Let us now discuss the properties of the kernel and of the range of the wave operator $W^-_{m,\kappa;\frac{1}{2},0}$.
By taking into account \cite[Thm.~III.6.17]{Kato} on the separation of the spectrum into two components,
and by using an alternative definition of the projection $\one_{\R_+}(H_{m,\kappa})$ in terms of the spectral density integrated on $\R_+$, as
provided in \cite[Sec.~6.4]{DR}, one infers that the subspace defined by $\one_{\R_+}(H_{m,\kappa})$ is complementary to the
subspace generated by the eigenfunctions of the operator $H_{m,\kappa}$. Since these eigenvalues are always in finite number
the codimension of $\one_{\R_+}(H_{m,\kappa})L^2(\R_+)$ is always finite.
Similarly, since $H_{\rm D}$ has no eigenvalue one gets $\one_{\R_+}(H_{\rm D})=\one$.
Then, by the properties \eqref{eq_rel1} and \eqref{eq_rel2} one infers that $W^-_{m,\kappa;\frac{1}{2},0}$ is a Fredholm
operator.
Summing up these information one gets~:

\begin{lemma}\label{lem_Fred}
Let $m\in \C$ with $|\Re(m)|\in (0,1)$ and assume that $(m,\kappa)$ is not an exceptional pair.
Then $W^-_{m,\kappa;\frac{1}{2},0}$ is a Fredholm operator with
\begin{equation*}
\dim\ker\big(W^-_{m,\kappa;\frac{1}{2},0}\big)
-\dim\coker\big(W^-_{m,\kappa;\frac{1}{2},0}\big) = -\# \sigma_{{\rm p}}(H_{m,\kappa}).
\end{equation*}
\end{lemma}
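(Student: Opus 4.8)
The plan is to deduce the statement from the algebraic relations \eqref{eq_rel1}--\eqref{eq_rel2} specialised to the reference pair $(m',\kappa')=(\tfrac12,0)$, together with the spectral picture of $H_{m,\kappa}$ recalled in Section~\ref{sec_model}. Throughout I abbreviate $W:=W^-_{m,\kappa;\frac{1}{2},0}$ and $P:=\one_{\R_+}(H_{m,\kappa})$, and I read $\#\sigma_{\rm p}(H_{m,\kappa})$ as the sum of the algebraic multiplicities of the eigenvalues.

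First I would handle the kernel. Since $H_{\rm D}$ has no eigenvalue one has $\one_{\R_+}(H_{\rm D})=\one$, so relation \eqref{eq_rel1} for the pair $(H_{m,\kappa},H_{\rm D})$ reduces to $W^{+\t}W=\one$. Hence $W$ admits a bounded left inverse; in particular $\ker W=\{0\}$ and $\dim\ker W=0$, and moreover $W$ has closed range.

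Next I would identify $\mathrm{ran}(W)$ and compute $\dim\coker(W)$. Relation \eqref{eq_rel2} gives $W\,W^{+\t}=P$, whence $\mathrm{ran}(P)\subseteq\mathrm{ran}(W)$; conversely, multiplying $W^{+\t}W=\one$ on the left by $W$ yields $P\,W=W$, so $\mathrm{ran}(W)\subseteq\mathrm{ran}(P)$. Therefore $\mathrm{ran}(W)=\mathrm{ran}(P)$ (which is closed, $P$ being a bounded idempotent) and $\coker(W)\cong L^2(\R_+)/\mathrm{ran}(P)$. It then remains to check that $\mathrm{codim}\,\mathrm{ran}(P)=\#\sigma_{\rm p}(H_{m,\kappa})$. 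Because $H_{m,\kappa}$ has only finitely many eigenvalues, all lying in $\C\setminus[0,\infty)$, they form an isolated part of $\sigma(H_{m,\kappa})$; by \cite[Thm.~III.6.17]{Kato} there is a decomposition $L^2(\R_+)=\mathcal N\oplus\mathcal M$ into closed subspaces invariant under $H_{m,\kappa}$, with $\mathcal M$ the finite-dimensional Riesz spectral subspace of the eigenvalues and $\dim\mathcal M=\#\sigma_{\rm p}(H_{m,\kappa})$. I would then invoke the description of $P=\one_{\R_+}(H_{m,\kappa})$ as the integral of the spectral density over $\R_+$ from \cite[Sec.~6.4]{DR} to identify $\mathrm{ran}(P)$ with the complement $\mathcal N$, so that $\dim\coker(W)=\#\sigma_{\rm p}(H_{m,\kappa})<\infty$. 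Combined with $\dim\ker W=0$, this shows that $W$ is Fredholm with index $-\#\sigma_{\rm p}(H_{m,\kappa})$.

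I expect the last step to be the only genuine point: establishing that the ``absolutely continuous'' projection $\one_{\R_+}(H_{m,\kappa})$ assembled from the spectral density coincides with the Riesz complement $\mathcal N$ of the generalised eigenspace. This uses the absence of singular continuous spectrum and the fact that $\sigma(H_{m,\kappa})\subseteq[0,\infty)\cup\sigma_{\rm p}(H_{m,\kappa})$, together with the limiting absorption principle on $(0,\infty)$ controlling the spectral density near the thresholds $0$ and $\infty$ --- all of which is supplied by the analysis in \cite{DR}. Everything else is a formal manipulation of \eqref{eq_rel1}--\eqref{eq_rel2}.
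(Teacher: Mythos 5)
Your argument is correct and follows essentially the same route as the paper: it uses $\one_{\R_+}(H_{\rm D})=\one$ together with relations \eqref{eq_rel1}--\eqref{eq_rel2} to get injectivity of $W$ and the identification $\mathrm{ran}(W)=\mathrm{ran}\big(\one_{\R_+}(H_{m,\kappa})\big)$, and then invokes \cite[Thm.~III.6.17]{Kato} and the spectral-density description of $\one_{\R_+}(H_{m,\kappa})$ from \cite[Sec.~6.4]{DR} to identify the cokernel with the (finite-dimensional) spectral subspace of the eigenvalues. Your write-up is in fact more detailed than the paper's, which states these steps only in outline.
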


The content of Theorem \ref{thm_main} can now be deduced either from \cite[Thm.~4.4]{R16} (with $n=-1$) or from
\cite[Thm.~3]{CH}. Let us however make some comments. In \cite{R16} statements similar to Theorem \ref{thm_main} are provided for various models
related to quantum mechanics. The main difference with these models is that here the operator $H_{m,\kappa}$
is not always self-adjoint. It follows that the wave operator $W^-_{m,\kappa;\frac{1}{2},0}$ is not always an isometry but only a Fredholm operator,
and accordingly the operator $\Gammaq_{m,\kappa;\frac{1}{2},0}$ is not always a unitary operator but is still an invertible operator.
However, for the algebraic construction used for obtaining a topological version of Levinson's theorem,
these differences are perfectly manageable and the construction works in this case as well, see for example \cite[Rem.~8.1.7]{RLL}.

The main ingredient for the algebraic construction consists first in exhibiting a natural subalgebra of $\B\big(L^2(\R_+)\big)$
which contains the wave operator $W^-_{m,\kappa;\frac{1}{2},0}$.
By looking at the special representation obtained for the wave operator in \eqref{eq_wave_op}
one deduces that $W^-_{m,\kappa;\frac{1}{2},0}$ belongs to the $C^*$-algebra $\EE_{(H_{\rm D},A)}$
introduced in \cite[Sec.~4.4]{R16}. This $C^*$-algebra is generated by product of the form
$\psi(H_{\rm D})\eta(A)$ with $\psi \in C\big([0,\infty]\big)$ and $\eta\in C\big([-\infty,\infty]\big)$.
Our interest in this algebra comes from its easily understandable quotient through the ideal $\K\big(L^2(\R_+)\big)$ of compact operators
on $L^2(\R_+)$. In fact, this quotient is isomorphic to the $C^*$-algebra $C(\square)$ which can be viewed as a subalgebra
of the one introduced in \eqref{eq_alg_sum}. Thus, if we set $q: \EE_{(H_{\rm D},A)} \to \EE_{(H_{\rm D},A)}/\K\big(L^2(\R_+)\big)$ for
the quotient map, one gets that $q\big(W^-_{m,\kappa;\frac{1}{2},0}\big) = \Gammaq_{m,\kappa;\frac{1}{2},0}= (\Gamma_1,\Gamma_2,\Gamma_3,\Gamma_4)$ with
$\Gamma_j$ introduced in \eqref{eq_1} to \eqref{eq_2}.
The final argument consists in borrowing the information from \cite[Thm.~4.4]{R16} that the winding number of $\Gammaq_{m,\kappa;\frac{1}{2},0}$
is equal to (minus) the Fredholm index of the operator $W^-_{m,\kappa;\frac{1}{2},0}$. By using Lemma \ref{lem_Fred} this leads directly to Theorem \ref{thm_main}.

Alternatively, one can directly use the content of Theorem 3 of \cite{CH} once it is observed that the algebra presented
in that paper corresponds to the algebra $\EE_{(\Q,A)}$ introduced in \cite[Sec.~4.4]{R16}. This algebra is isomorphic to the $C^*$-algebra
$\EE_{(H_{\rm D},A)}$ by a conjugation with the unitary map $\cF_{\rm D}$. The difference of a minus sign between the content of \cite[Thm.~3]{CH}
and Theorem \ref{thm_main} comes from the equality $\cF_{\rm D}^*A\cF_{\rm D} = -A$ which reverses part of the construction.
Finally, let us also mention Theorem 4 in \cite{BC} which provides a similar abstract result but in a larger setting.

\begin{remark}
A more analytical approach involving Jost function could also be used
for proving Corollary \ref{corol_Lev}. However, the flavor of an index theorem
would be lost, and the contributions of the operators at $0$-energy
and at energy $+\infty$ would not appear so explicitly.
\end{remark}

\end{document}